\def\fullversion{0}
  \newcommand{\dacrecur}{{$(\alpha,\beta,k,l,m)$-recurrence}\xspace}
  \newcommand{\op}[1]{\texttt{#1}}
  \newcommand{\forkins}{\texttt{fork}\xspace}
  \newcommand{\joinins}{\texttt{join}\xspace}
  \newcommand{\thread}{thread\xspace}
  \newcommand{\depth}{span\xspace}
  \newcommand{\tas}{{\texttt{test\_and\_set}}\xspace}
  \newcommand{\cas}{{\texttt{compare\_and\_swap}}\xspace}
  \newcommand{\kdgrid}{$k$-d grid\xspace}
  \newcommand{\ifconference}{{{\ifx\fullversion\undefined}}}
\newtheoremstyle{exampstyle}
  {.05in} 
  {.05in} 
  {} 
  {.5em} 
  {\sc \bfseries} 
  {.} 
  {.5em} 
  {} 
\theoremstyle{exampstyle} 
\theoremstyle{exampstyle} 
\renewenvironment{proof}[1][\proofname]{\par
  \vspace{-\topsep}
  \pushQED{\qed}%
  \normalfont
  \topsep0pt \partopsep0pt 
  \trivlist
  \item[\hskip\labelsep
        \itshape
    #1\@addpunct{.}]\ignorespaces
}{%
  \popQED\endtrivlist\@endpefalse
  \addvspace{3pt plus 3pt} 
}
\begin{document}
\fancyhead{}
\title[]{Analysis of Work-Stealing and \\Parallel Cache Complexity}
  \author{Yan Gu}
  \affiliation{\institution{UC Riverside}}
  \email{ygu@cs.ucr.edu}
  \author{Zachary Napier}
  \affiliation{\institution{UC Riverside}}
  \email{znapi001@ucr.edu}
  \author{Yihan Sun}
  \affiliation{\institution{UC Riverside}}
  \email{yihans@cs.ucr.edu}
\begin{abstract}
Parallelism has become extremely popular over the past decade, and there have been a lot of new parallel algorithms and software.
The randomized work-stealing (RWS) scheduler plays a crucial role in this ecosystem.
In this paper, we study two important topics related to the randomized work-stealing scheduler.

Our first contribution is a simplified, classroom-ready version of analysis for the RWS scheduler.
The theoretical efficiency of the RWS scheduler has been analyzed for a variety of settings, but most of them are quite complicated.
In this paper, we show a new analysis, which we believe is easy to understand, and can be especially useful in education.
We avoid using the potential function in the analysis, and we assume a highly asynchronous setting, which is more realistic for today's parallel machines.

Our second and main contribution is some new parallel cache complexity for algorithms using the RWS scheduler.
Although the sequential I/O model has been well-studied over the past decades, so far very few results have extended it to the parallel setting.
The parallel cache bounds of many existing algorithms are affected by a polynomial of the span, which causes a significant overhead for high-span algorithms.
Our new analysis decouples the span from the analysis of the parallel cache complexity.
This allows us to show new parallel cache bounds for a list of classic algorithms.
Our results are only a polylogarithmic factor off the lower bounds, and significantly improve previous results.
\end{abstract} 

\maketitle
\makeatletter
\newcommand{\removelatexerror}{\let\@latex@error\@gobble}
\makeatother


\section{Introduction}
Hardware advances in the last decade have brought multicore parallel machines to the mainstream.
While there are multiple programming paradigms and tools to enable parallelism in multicore machines, the one based on \defn{nested parallelism} with \defn{randomized work-stealing (RWS) scheduler} is with no doubt the most popular and widely used.
The nested parallelism model and its variants have been supported by most parallel programming languages (e.g., Cilk, TBB, TPL, X10, Java Fork-join, and OpenMP), introduced in textbooks (e.g., Cormen, Leiserson, Rivest and Stein~\cite{CLRS}), and employed in a variety of research papers (to list a few: \cite{agrawal2014batching,blelloch2010low,BCGRCK08,BG04,Blelloch1998,blelloch1999pipelining,BlellochFiGi11,BST12,Cole17,BBFGGMS16,dinh2016extending,chowdhury2017provably,blelloch2018geometry,dhulipala2020semi,BBFGGMS18,Dhulipala2018,blelloch2020randomized,gu2021parallel,blelloch2016justjoin,sun2018pam,sun2019parallel,ptreedb}).
At a high level, this model allows an algorithm to recursively and dynamically create (\emph{fork}) parallel tasks, which will be executed on $P$ processors by a dynamic scheduler.
This nested (binary) fork-join provides a good abstraction for shared-memory parallelism.
On the user (algorithm designer or programmer) side, it is a simple extension to the classic programming model with additional keywords for creating new tasks (e.g., \op{fork}) and synchronization (e.g., \op{join}) between tasks.

The \defn{randomized work-stealing (RWS) scheduler} plays a crucial role in this ecosystem. It automatically and dynamically maps a nested parallel algorithm to the hardware efficiently both in theory and in practice. In this paper, we study two important topics related to the RWS scheduler. First, we show a \defn{new, simplified, and classroom-ready version of analysis for the RWS scheduler}, which we believe is easier to understand than existing ones.
Second, we provide some \defn{new analyses for parallel cache complexity} for nested-parallel algorithms based on the RWS scheduler.
We provide a list of almost optimal bounds listed in \Cref{table:intro}, and more discussions will be given later.

Our first contribution is a simplified analysis for the RWS scheduler. The theoretical efficiency of the RWS scheduler was first given by Blumofe and Leiserson~\cite{blumofe1999scheduling}, and is later analyzed for a variety of settings (to list a few: \cite{Acar02,itpa,muller2016latency,acar2013scheduling,BBFGGMS16,singer2019proactive,Acar2016tapp,singer2020scheduling,arora2001thread}).
Although these analyses essentially consider more complex settings (e.g., to also consider external I/Os), the analyses themselves are quite complicated.
To the best of our knowledge, the details of the proofs are covered in very few courses related to parallelism, and in most cases, RWS is just treated as a black box.
Hence, we simply consider the goal of bounding the number of steals of the RWS scheduler, and we want to answer the question of what the simplest analysis for the RWS scheduler can be, or what is the most comprehensible version in education.

This paper presents a simplified analysis in \Cref{sec:rws}.
Unlike most of the existing analyses, our version does not rely on defining the potential function for a substructure of the computation, which we believe is easy to understand.
Our analysis is inspired by a recent analysis~\cite{itpa} that is similar to~\cite{agrawal2008adaptive,suksompong2014bounds}.
Our analysis differs from~\cite{itpa} in two aspects.
First, unlike~\cite{itpa}, we do not assume all processors run in lock-steps (the PRAM setting).
Instead, we assume a highly asynchronous setting, which is more realistic for today's parallel machines.
Second, we separate the math calculation from the details of the RWS algorithm, which may be helpful for classroom teaching.

Our second and main contribution of this paper is on the parallel cache complexity for algorithms using the RWS scheduler.
On today's machines, the memory access cost usually dominates the running time of most combinatorial algorithms.
To capture this, sequentially, Aggarwal and Vitter~\cite{AggarwalV88} first formalized the external-memory model to capture the I/O cost of an algorithm, which was refined by Frigo et al.~\cite{Frigo99} as the ideal-cache model.
The cost measure is called \emph{I/O complexity}~\cite{AggarwalV88} (noted as $Q_1$) or \emph{cache complexity}~\cite{Frigo99} when specifying the communication cost between the cache and the main memory.
While this model has received great success in the algorithm and database communities, so far, few results have extended it to the parallel (distributed cache) setting, which we summarize in \Cref{sec:cachebound}.
Among them, Acar, Blelloch and Blumofe~\cite{Acar02} first defined and showed that the parallel cache complexity $Q_P$ is at most $Q_1+O(PDM/B)$, where $P$, $D$, $M$ and $B$ are the number of processors, span (aka.\ depth, the longest critical path of dependences), cache size, and cache block size, respectively (definitions in \Cref{sec:prelim}).
However, this bound is pessimistic and usually too loose. 
Frigo and Strumpen~\cite{frigo2009cache} and later work by Cole and Ramachandran~\cite{cole2013analysis,cole2012revisiting} showed tighter parallel cache complexity for many cache-oblivious algorithms summarized \Cref{table:intro}.
However, the bounds have a polynomial overhead\footnote{Such an overhead is usually $n^{a}$, where $a$ is a constant and $1/2\le a<1$.} on the input size when the span is polynomial to the input size.
Such overhead can easily dominate the cache bound when plugging in the real-world input size, as discussed in the caption of \Cref{table:intro}.
It remained an open question on how to close this gap.

In this paper, we significantly close this gap for a variety of algorithms.
The polynomial overhead in the previous analysis~\cite{Acar02,frigo2009cache,cole2013analysis,cole2012revisiting} is due to the high span of these algorithms (a polynomial of the algorithm's span shows up in the bound).
The key insight in our analysis is to break such polynomial correlation between the algorithm's span and the overhead for parallel cache complexity.
Our new analysis is inspired by the abstraction of \kdgrid{} proposed recently by Blelloch and Gu~\cite{BG2020}, which was previously used to show sequential cache bounds.
We extend the idea for analyzing parallel cache complexity.
Our analysis directly studies the ``real'' dependencies in the computation structure of these algorithms, instead of those caused by parallelism.
In particular, the core of our analysis is just the \emph{recurrences} of these algorithms.
As a result, the parallel dependency of the computation (the algorithm's span) does not show up in the analysis.
This helps us avoid the complication of digging into the details of the scheduling algorithms in the analysis, and the analysis is no more complicated than just solving recurrences.
To do this, we define the \dacrecur{}, which effectively models the pattern of recurrences of many classic algorithms, and show a general theorem, \Cref{thm:recur}, to solve these recurrences. By applying these results, we can achieve the new bounds in \Cref{table:intro}.
The parallel overheads are polylogarithmic instead of polynomial to the input size, compared to the lower bounds by~\cite{ballard2014communication,BG2020}.
We believe the methodology is of independent interest, and could be useful in analyzing other parallel or sequential algorithms. We leave this as future work.

The contributions of this paper are as follows.

\begin{enumerate}
  \item We show a new analysis for the randomized work-stealing scheduler, which avoids the use of potential functions and is very simple.
  \item We propose the \dacrecur{} and a general theorem (\Cref{thm:recur}) for solving it, which applies to solving the parallel cache complexity for many classic parallel algorithms.
  \item We show new cache complexity bounds for a variety of algorithms, shown in \Cref{table:intro}, which significantly improves existing results, and is very close to the lower bounds (only a polylogarithmic overhead).
\end{enumerate}


\begin{table*}
  \def\arraystretch{1.4}
  \small
  \begin{tabular}{@{}p{4cm}<{\centering}@{  }c@{  }@{  }c@{  }@{  }c@{}c@{}}
  \toprule
    \multirow{2}[0]{*}{Algorithm} & Seq. & \multicolumn{3}{c}{Parallel Overhead}   \\\vspace{-1.5em}
    & Bound & New in this paper & Previous best~\cite{frigo2009cache,cole2012revisiting,BG2020} & Lower Bound~\cite{ballard2014communication,BG2020}\\
  \midrule
Gaussian Elimination  & \multirow{2}[0]{*}{$\frac{n^3}{B\sqrt{M}}+\frac{n^2}{B}+1$} & \multirow{2}[0]{*}{\(P^{1/3}\log^{2/3}{n}\cdot\frac{n^2}{B} + Pn\)} & \multirow{2}[0]{*}{\(P^{1/3}{n^{1/3}}\cdot\frac{n^2}{B} + Pn\log B\)}& \multirow{2}[0]{*}{$P^{1/3}\cdot\frac{n^2}{B}$}\\\vspace{-1.5em}
Kleene's algorithm for APSP\\

Triangular System Solver & $\frac{n^3}{B\sqrt{M}}+\frac{n^2}{B}+1$ & \(P^{1/3}\log^{5/3}n\cdot\frac{n^2}{B} + Pn\) & \(P^{1/3}n^{1/3}\cdot\frac{n^2}{B} + Pn\log B\)& $P^{1/3}\cdot\frac{n^2}{B}$\\

Cholesky Factorization & \multirow{2}[0]{*}{$\frac{n^3}{B\sqrt{M}}+\frac{n^2}{B}+1$} & \multirow{2}[0]{*}{$P^{1/3}\log^{5/3}n\cdot\frac{n^2}{B}+Pn\log n$}& \multirow{2}[0]{*}{$P^{1/3}n^{1/3}\log^{1/3}n\cdot\frac{n^2}{B}+P n \log n$} &\multirow{2}[0]{*}{$P^{1/3}\cdot\frac{n^2}{B}$}\\\vspace{-1.5em}
LU Decomposition & \\

LWS Recurrence & $\frac{n^2}{BM}+\frac{n}{B}+1$ & \(P^{1/2} \log^2{n}\cdot\frac{n}{B} + Pn\) & \(P^{1/2}\cdot n^{1/2}\cdot\frac{n}{B}+Pn\) & $P^{1/2}\cdot\frac{n}{B}$\\
GAP Recurrence & $\frac{n^3}{BM}+\frac{n^2\log M}{B}+1$ & \(P^{1/2} \log^2{n}\cdot\frac{n^2}{B} + P n^{\kappa}\) & \(P^{1/2}n^{\kappa/2}\log{M}\cdot\frac{n^2}{B} + Pn^{\kappa}\)& $P^{1/2}\cdot\frac{n^2}{B}$\\
Parenthesis Recurrence & $\frac{n^3}{B\sqrt{M}}+\frac{n^2}{B}+1$ & \(P^{1/3} \log^{5/3}{n}\cdot\frac{n^2}{B} + Pn^{\kappa}\) & \(P^{1/3}n^{\kappa/3}\cdot\frac{n^2}{B} + Pn^{\kappa}\) & $P^{1/3}\cdot\frac{n^2}{B}$\\
RNA Recurrence & $\frac{n^4}{BM}+\frac{n^2}{B}+1$ & \(P^{1/2} \log^2{n}\cdot\frac{n^2}{B} + Pn^{\kappa}\) & \(P^{1/2} n^{\kappa/2} \cdot\frac{n^2}{B}+Pn^{\kappa}\) & $P^{1/2}\cdot\frac{n^2}{B}$\\
Protein Accordion Folding & $\frac{n^3}{BM}+\frac{n^2}{B}+1$ & \(P^{1/2}\log n \cdot \frac{n^2}{B} + P n \log^2 n\) & \(P^{1/2} n^{1/2} \log n \cdot \frac{n^2}{B} + Pn\log^2n\) & $P^{1/2}\cdot\frac{n^2}{B}$\\
\bottomrule
\end{tabular}
\caption{\textbf{Sequential and parallel cache complexity for a list of algorithms.}  Here $M$, $B$, and $P$ are the cache size, the block size, and the number of processors, respectively, and $\kappa=\log_23\approx 1.58$.
The sequential bounds are from various existing papers~\cite{BG2020,Frigo99,chowdhury2008cache,dinh2016extending}.
The parallel cache complexity is the sequential cache complexity plus the parallel overhead.
The best previous parallel bounds are either from~\cite{frigo2009cache,cole2012revisiting}, or what we computed based on the algorithms from~\cite{BG2020} and the analysis from~\cite{frigo2009cache,cole2012revisiting}.
The new bounds are analyzed in Sec.~\ref{sec:pcc}.
For all algorithms/problems, the new parallel bounds are much tighter.
Using LWS recurrence as an example, for the best previous bound, the parallel overhead always dominates unless \(P^{1/2}n^{3/2}/{B}<n^2/BM\), which gives $n=\omega(PM^2)$ (at least $2^{45}$ when plugging in real-world values for $M$ and $P$, which is impossible to store and compute).
Our new bound improve the dominating term of $O(P^{1/2}n^{3/2}/B)$ by $\sqrt{n}/\log^2n$.
Similarly, for other algorithms, our new bounds replace the $n^a$ in the dominating terms for $a=1/2$, $1/3$, $\kappa/2$, or $\kappa/3$, with the $\log^b n$ for $b\le 2$.
In all cases, the gaps between new parallel upper bounds and the lower bounds on the specific computations are $\log^b n$ for $b\le 2$, which were previously a polynomial factor $n^a$ for $a=1/2$, $1/3$, $\kappa/2$, or $\kappa/3$.
\label{table:intro}}

\end{table*}

\hide{
\section{old}
Hardware advances in the last decade have brought multicore parallel machines to the mainstream.
While there are multiple programming paradigms and tools to enable parallelism in multicore machines, the one based on \emph{nested fork-join} with \emph{randomized work-stealing scheduler} has no doubt to be the most popular and widely-used. 
In a high level, an algorithm can recursively and dynamically create (\emph{fork}) parallel tasks, which will be executed and executed on all cores. We give more details of the model in \Cref{sec:prelim}.

Nested (binary) fork-join provides a good abstraction for shared-memory parallelism.
On the user (algorithm designer or programmer) side, it is a simple extension to the classic programming model with additional keywords for fork and join.
On the system side, the randomized work-stealing scheduler automatically runs the code on all cores, with good theoretical guarantees.
This abstraction is also used in most of the modern textbooks and courses for parallel algorithms and programming (to list a few: XXX).

The randomized work-stealing (RWS) scheduler plays an crucial role in this ecosystem since it dynamically maps a parallel algorithm to the hardware in an efficient manner both theoretically and practically.
The theoretical efficiency of the RWS scheduler has been first shown by Blumofe and Leiserson~\cite{BL98}, and later analyzed for a variety of settings (to list a few: YYY).
Unfortunately, all of these analyses are quite complicated, and the RWS scheduler is usually treated as a black box.
To the best of our knowledge, the details of the proofs are not covered in any of the existing courses listed above.
Given the importance of the RWS scheduler, it is of crucial relevance to simplify the analysis that is more comprehensible in courses.

This paper shows such a simplified analysis.
Unlike most of the existing analyses (BL and others), our version does not rely on defining the potential function for a substructure of the computation, which we believe is easier to understand\footnote{We note that this will limit the applicability of this analysis in more settings, but the goal here is to provide a simple version for the plain setting.}.
Our analysis is inspired by a recent analysis in~\cite{itpa}.
However, unlike that in~\cite{itpa} (a single long proof), our proof breaks down into several lemmas that each is simple both conceptually and in details.
Then, the proof of the theorem simply combines the lemmas.
We believe this version is particularly suitable in the classroom.

Our second and main contribution of this paper is about parallel cache complexity based on the RWS scheduler.
Nowadays, memory access cost dominates the running time of most combinatorial algorithms.
To capture this, sequentially, Aggarwal and Vitter~\cite{AggarwalV88} first formalized the external-memory model to capture the I/O cost of an algorithm, which was refined by Frigo et al.~\cite{Frigo99} as the ideal-cache model.
The cost measure is I/O complexity~\cite{AggarwalV88} (usually $Q_1$) or cache complexity~\cite{Frigo99} when specifying the communication cost between the cache and the main memory.
While this model has received a great success in the algorithm and database communities, so far we have few results on extending it to the parallel (distributed cache) setting.
The two main results include the follows.
Acar, Blelloch and Blumofe~\cite{Acar02} showed that the parallel cache complexity $Q_p$ is $Q_1+O(pDM)$, where $p$, $D$ and $M$ are the number of cores, span (aka.\ depth, the longest critical
path of dependences), and cache size.
However, this bound is usually too loose and counterintuitive: for the same algorithm, $Q_p$ increases as the increase of $M$ (cache size), but in reality it is the opposite.
Hence, Frigo and Strumpen~\cite{frigo2009cache} and later work by Cole and Ramachandran~\cite{cole2013analysis,cole2012revisiting} analyzed the RWS scheduler and the computational structure for certain cache-oblivious algorithms, and showed tighter parallel cache complexity as compared to~\cite{Acar02}.
However, their bounds are still relatively loose, and have an overhead that is polynomial on the input size when the span is polynomial to the input size.
More details can be found in \Cref{tab:mainresult}, and the parallelism overhead can easily dominate the cache bound when plugging in real-world input size.

In this paper, we show how to break this polynomial dependence between the algorithm's span and the overhead for parallel cache complexity.
We note that unlike the previous work~\cite{frigo2009cache,cole2013analysis,cole2012revisiting} that analyzed how the computation is scheduled by RWS, we directly study the recurrence structures of the computations.
This novel idea will lead to a list of tighter bounds as shown in \Cref{tab:mainresult}.
This is inspired by the abstraction of \kdgrid proposed recently by Blelloch and Gu~\cite{BG2020}, which was originally used to show cache lower bounds independently with the dependence structure of the computation.
Compared to the lower bounds by~\cite{ballard2014communication,BG2020}, the overheads of all bounds shown in this paper are polylogarithmic instead of polynomial to the input size.

//Check if we want to mention lower-order terms later.
}

\section{Preliminaries}
\label{sec:prelim}

\subsection{Nested Parallelism}\label{sec:np-prelim}
The {nested parallelism model} is a programming model for shared-memory parallel algorithms.
This model allows algorithms to recursively and dynamically create new parallel tasks (threads).
The computation will be simulated (scheduled) on $P$ loosely synchronized processors, and explicit synchronization can be used to
let threads reach consensus. Some commonly-used examples include the (binary) fork-join model and binary forking model.
More precisely, in this model, we assume a set of \thread{}s that have
access to a shared memory.
A computation starts with a single {root} \thread{} and finishes when all \thread{s} finish.
Each \thread{} supports the same
operations as in the sequential RAM model, but also has a \forkins{}
instruction that forks a new child \thread{} that can be run in parallel.
Threads can synchronize with each other using some primitives. The most common synchronization is \joinins{}, where every \forkins{} corresponds to a later \joinins{}, and the \forkins{} and corresponding \joinins{} are properly nested. In more general models (e.g., the binary-forking model~\cite{blelloch2020optimal}) threads are also allowed to be synchronized with any other threads, probably by using atomic primitives such as \tas{} or \cas{}.
This model is the most widely-used model for multicore programming, and is supported by many parallel languages including NESL~\cite{blelloch1992nesl},
Cilk~\cite{frigo1998implementation}, the Java fork-join framework~\cite{Java-fork-join}, OpenMP~\cite{OpenMP}, X10~\cite{charles2005x10}, Habanero~\cite{budimlic2011design},
Intel Threading Building Blocks~\cite{TBB}, the Task Parallel Library~\cite{TPL}, and many others.
In this model, we usually require the computation to be either \defn{race-free}~\cite{feng1999efficient} (i.e., no logically parallel instructions access the same memory
location and at least one is a write), or to only use atomic operations (e.g., \tas{} or \cas{}) to deal with concurrent writes (e.g.,~\cite{blelloch2012internally}).

\subsection{Work-Span Measure}

For a computation using nested parallelism, we can measure its \emph{work} and \emph{span} by evaluating its series-parallel computational DAG (i.e., a DAG modeling the dependence between operations in the computation). The \defn{work} $W$ is the number of operations in this computation, or the costs of all tasks in the computation DAG (the time complexity on the RAM model). The \defn{span} (or \defn{depth}) $D$ is the maximum number of operations over all directed paths in the computation DAG.  

\subsection{Randomized Work-Stealing (RWS) Scheduler}\label{sec:rws-prelim}

In practice, a nested-parallel computation can be scheduled on multicore machines using the \defn{randomized work-stealing (RWS)} algorithm.
More details of the RWS scheduler can be found in~\cite{blumofe1999scheduling}, and here we overview the high-level ideas that will be used in our analysis.
The RWS scheduler assigns one double-ended queue (\defn{deque}) for each processor that can execute a \thread{} at a time.
One processor starts with taking the root \thread{}.
Each processor then proceeds as follows:
\begin{itemize}
  \item If the current \thread{} runs a \forkins, the processor enqueues one \thread{} at the front of its queue (spawned child), and executes the other \thread{} (continuation).
  \item If the current \thread{} completes, the processor pulls a \thread{} from the front of its own queue.
  \item If a processor's queue is empty, it randomly selects one of the other processors, and steals a \thread{} from the end of that processor's queue (victim queue). If that fails, the processor retries until succeeds.
\end{itemize}
Since a steal can be pretty costly (involves complicated inter-processor communication) in practice, a common practice is to wait for at least the time for a successful steal before retrying~\cite{Acar02}.

The overhead of executing the computation based on the RWS scheduler is mainly on the steal attempts (both successful ones and failed ones), plus maintaining the deque.
Hence, bounding the number of steals is of great interest for multicore parallelism.

\begin{theorem}\label{thm:rws}
  Executing a series-parallel computation DAG with work $W$ and span $D$ on an RWS scheduler uses $O(PD)$ steals \whp{} to $W$,\footnote{We use the term $O(f(n))$ with high probability (\whp{}) in $n$ to indicate the bound $O(kf(n))$ holds with probability at least $1-1/n^k$ for any $k\ge 1$. With clear context we drop ``in $n$''.} where $P$ is the number of processors.
\end{theorem}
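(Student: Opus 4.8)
The plan is to avoid any potential-function argument and instead track the progress made along a single longest path in the computation DAG, charging steal attempts directly to the advancement of this ``critical path.'' Throughout, I would measure progress by the \emph{remaining span}, i.e., the length of the longest path from a ready node (one all of whose predecessors have executed) to the sink; this quantity starts at $D$ and must reach $0$ for the computation to finish. Among all currently ready nodes, call one attaining the maximum remaining span the \emph{critical node}.

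First I would establish the structural backbone of the argument: at every instant of the execution, the current critical node is either being executed by some processor or sits at the bottom (the steal end) of exactly one processor's deque. This follows from the LIFO-front/FIFO-back discipline of the RWS scheduler together with the series-parallel structure, under which the threads in any one deque are nested ancestors in the spawn tree, with the bottom thread the shallowest; hence a steal always removes the thread whose descendant subtree carries the longest remaining dependence chain of that processor. I would prove this by induction over the scheduler events (\forkins, thread completion, and a steal), checking that each event preserves the invariant that the globally deepest ready node is never buried in the interior of a deque. I expect this lemma to be the \textbf{main obstacle}: it is where the combinatorics of the deque must be reconciled with the global critical path, and it is the step that has to be stated carefully enough to stay valid in the asynchronous (non-lock-step) setting.

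Second, with the structural lemma in hand, I would run the probabilistic argument on steal \emph{attempts} rather than on synchronized time steps, which is precisely what lets the analysis dispense with a PRAM assumption. Condition on the critical node residing at the bottom of processor $q$'s deque. Since each steal attempt picks its victim uniformly among the other $P-1$ processors, independently of the deque contents, it targets $q$ and thereby exposes the critical node with probability at least $1/P$. Whenever the critical node is instead consumed by being executed by its own owner, no steal is charged, yet the critical path still advances by one; so each unit of progress along the critical path costs at most $P$ steal attempts in expectation.

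Finally I would aggregate over the whole path. The remaining span must be reduced exactly $D$ times, and the number of steal attempts spent waiting to expose the $i$-th critical node is dominated by a geometric variable with success probability at least $1/P$; conditioned on the realized sequence of critical nodes, these $D$ waiting times are independent. Summing gives expected total steal attempts at most $PD$, and a Chernoff-type tail bound on this sum of independent geometric variables shows the total is $O(PD)$ except with probability at most $1/W^{k}$, where I use $D \le W$ and $P \le W$ to recast the exponent into the desired \whp-in-$W$ form. Combining the three steps establishes that the RWS scheduler performs $O(PD)$ steals \whp, as claimed.
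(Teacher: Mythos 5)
You have correctly anticipated where the difficulty lies, but the structural lemma on which your whole argument rests is false as stated, and the paper's proof is organized precisely so as to never need it. Consider a processor $q$ whose thread forks a child $B$ (pushed onto the deque), continues, forks a second child $C$ (pushed in front of $B$), and is now executing the continuation. The deque, read from the steal end inward, is $B, C$, and the nesting of \forkins and \joinins pairs means a steal always removes $B$ first. But nothing forces $B$ to root the deeper subcomputation: if $B$ is a constant-size leaf task while $C$ roots a subdag of span $\Theta(D)$, then $C$ is your critical node (maximum remaining span among ready nodes), yet it is neither being executed nor at the steal end of any deque. What the monotone structure of the deque actually gives you is that the steal-end thread has the minimum \emph{distance from the source}, hence the largest a priori upper bound $D-\mathrm{depth}(u)$ on remaining span; its \emph{actual} remaining span can be arbitrarily small. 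Repairing the lemma by switching to the depth-based quantity reintroduces the problem that up to $2^d$ ready nodes can share the minimum depth $d$, which is exactly what the Arora--Blumofe--Plaxton potential function exists to handle, so the repair leads back to the machinery you set out to avoid. Quantitatively, when the critical node is buried under $j$ earlier-spawned siblings it must wait for $j+1$ successful steals (or for the owner to unwind its LIFO stack down to it), and $j$ can be $\Theta(D)$, so a unit of critical-path progress is no longer dominated by a single geometric of mean $O(P)$.

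The paper sidesteps this with a different decomposition: rather than dynamically tracking one critical node, it fixes a source-to-sink path, pessimistically assumes every node on it is a spawned child that must be stolen, bounds the steal attempts needed for that single path by a Chernoff bound (Lemmas 1--3, with the $e/(e-1)$ factor of Lemma 2 absorbing concurrent attempts that collide on one victim), and then union-bounds over all $2^D$ paths with $\epsilon=1/(2^D W^c)$; since $D=\Omega(\log W)$, the union bound costs only a constant factor, and no claim is ever needed about where the pending path node sits in its deque relative to other ready work. Two smaller points would also need care even if your lemma held: the $D$ waiting times are not independent conditioned on ``the realized sequence of critical nodes'' (that conditioning is itself a function of the steal outcomes), though stochastic domination by independent geometrics suffices; and a steal attempt that targets the right victim can still lose the race to a concurrent attempt, which is why the paper treats groups of $P-1$ attempts in Lemma 2.
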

\Cref{thm:rws} was first shown by Blumofe and Leiserson~\cite{blumofe1999scheduling}, and was later proved in different papers (e.g.,\cite{Acar02,itpa,muller2016latency,acar2013scheduling,BBFGGMS16,singer2019proactive,Acar2016tapp,singer2020scheduling,arora2001thread}).
Since RWS is an asynchronous algorithm, certain synchronization assumptions are required.
Early work~\cite{blumofe1999scheduling} assumed all processors are fully synchronized and all operations have unit cost.
Later work~\cite{Acar02} relaxed it (and thus made it more realistic) that a steal attempt takes at least $s$ and at most $ks$ time steps where $s$ is the cost for a steal and $k\ge 1$ is a constant.
In this paper, we further relax the assumption of synchronization---we assume that, between a failed steal and the next steal attempt of each processor, every other processor can try at most $k$ steal attempts for some constant $k\ge 1$.
Also, between two steal attempts from the same processor, another processor that has work to do will execute at least one instruction.
We believe such a relaxation is crucial since, in practice, processors are highly asynchronous due to various reasons, including cache misses, processor pipelines, branch prediction, hyper-threading, changing clock speeds, interrupts, or the operating system scheduler.
Hence, it is hard to define what time steps mean for different processors.
However, it is reasonable to assume that processors run in similar speeds within a constant factor, and the steal attempts are not too often (in practice the gap between two steal attempts is usually set to be at least hundreds of to thousands of cycles).
In the analysis, we consider the simpler case for $k=1$, but it is easy to see that a larger $k$ will not asymptotically affect the scheduling result (\Cref{thm:rws}) as long as $k$ is a constant.

\subsection{Cache Complexity}\label{sec:cachebound}

Cache complexity (aka.\ I/O complexity) measures the memory access cost of an algorithm, which in many cases can be the bottleneck of the execution time, especially for parallel combinatorial algorithms.
The idea was first introduced by Aggarwal and Vitter~\cite{AggarwalV88}, and has been widely studied since then.
Here we use definition by Frigo et al.~\cite{Frigo99} that is the most adopted now.
Here we assume a two-level memory hierarchy.
The CPU is connected to a small-memory (cache) of size $M$, and this small-memory is connected to a large-memory (main memory) of effectively infinite size.
Both small-memory and large-memory are divided into blocks of size $B$ (cachelines), so there are $M/B$ cachelines in the cache.
The CPU can only access the memory on blocks resident in the cache and it is free of charge.
Finally, we assume an optimal offline cache replacement policy, which is automatic, to transfer the data between the cache and the main memory, and a unit cost for each cacheline load and evict.
The practical policy such as LRU or FIFO, are $O(1)$-competitive with the optimal offline algorithm if they have a cache with twice the size.
The cache complexity of an algorithm, $Q_1$, is the total cost to execute this algorithm on such a model.

The above measure is sequential.
One way to extend it to the parallel setting is the ``Parallel External Memory'' (PEM) model~\cite{Arge08} that analogs the PRAM model~\cite{shiloach1981finding}.
However, since modern processors are highly asynchronous~\cite{blelloch2020optimal} instead of running in lock-steps as assumed in PRAM, the PEM model cannot measure the communication between cache and main memory well.
An alternative solution is to assume multiple processors work independently, and they either share a common cache or own their individual caches.
For individual caches (aka.\ distributed caches), the parallel cache complexity $Q_P$ based on RWS is upper bounded by $Q_1+O(SM/B)$ \whp, where $S$ is the number of steals~\cite{Acar02}.
This is easy to see since each steal can lead to, at worst, an entire reload of the cache, with cost $O(M/B)$.
Applying \Cref{thm:rws}, we can get $Q_P\le Q_1+O(PDM/B)$.

There have also been studies on other cache configurations, such as shared caches~\cite{BG04}, multi-level hierarchical caches~\cite{BlellochFiGi11,SBFGK14,BCGRCK08}, and varying cache sizes~\cite{bender2014cache,bender2016cache}.
Many parallel cache-efficient algorithms have been designed inspired by these measurements (e.g.,~\cite{cole2010resource,tang2011pochoir,BCGRCK08,chowdhury2008cache,blelloch2010low,chowdhury2010cache,BG2020,BFGGS15,Arge08}).

\section{Simplified RWS Analysis}\label{sec:rws}
The randomized work-stealing (RWS) scheduler plays a crucial role in the ecosystem of nested-parallel algorithms and multicore platforms, and it dynamically maps the algorithms to the hardware. RWS is efficient both theoretically and practically.
The theoretical efficiency of the RWS scheduler has been first given by Blumofe and Leiserson~\cite{blumofe1999scheduling}. A lot of later work analyzed RWS in different settings~\cite{Acar02,itpa,muller2016latency,acar2013scheduling,BBFGGMS16,singer2019proactive,Acar2016tapp,singer2020scheduling,arora2001thread}.
As mentioned, most of these analyses are quite involved, especially when they also consider some more complicated settings (e.g., external I/O costs). This makes the analysis very hard to cover in undergraduate or graduate courses, and RWS is usually treated as a black box.
Given the importance of the RWS scheduler, it is crucial to make the analysis more comprehensible so that it can be taught in classes.
In the following, we will show a simplified analysis for RWS, which proves \Cref{thm:rws}.

Unlike most of the existing analyses, our version does not rely on defining the potential function for a substructure of the computation.
We understand that this will limit the applicability of this analysis in some settings (e.g., we do not extend the results to consider external I/O costs), but our goal is to provide a simple proof for a reasonably general setting (the binary-forking model~\cite{blelloch2020optimal}), and make it easy to understand in most parallel algorithm courses.

Our analysis is inspired by a recent analysis in~\cite{itpa} that is similar to those in~\cite{agrawal2008adaptive,suksompong2014bounds}.
Our analysis differs from~\cite{itpa} in two aspects.
First, the analysis in~\cite{itpa} assumes all processors run in lock-steps (PRAM setting).
However, on today's machines, the processors are loosely synchronized with different relative processing rates changing over time.
Hence, the processors can run in different speeds and do not and should not be synchronized.
In our analysis, we assume, between a failed steal attempt and the next steal attempt on each processor, every other processor can make at most $k$ steal attempts for some constant $k\ge 1$.
We believe this is a reasonably realistic assumption that all today's machines and RWS implementations satisfy.
For simplicity, in our analysis we use $k=1$, and it is easy to see that this does not affect the asymptotical bounds.

Secondly, instead of showing a single long proof, we tried our best to improve its understandability, and separate the math and calculation from the main idea of the proof.
We start from the most optimistic case as a motivating example in \Cref{thm:ws-lemma-1}, and then generalize it to \Cref{thm:ws-lemma-2} and \Cref{thm:ws-lemma-3}.
The only mathematics tools we use are Chernoff bound and union bound.
These lemmas pave the path to, and decouple the mathematics from, the proof of \Cref{thm:rws},
which involves more details and the core idea about RWS. 
We believe this can be helpful for classroom teaching.

\medskip
We say two steal attempts \defn{overlap} with each other when they choose the same victim processor, and happen concurrently.
To start with, we first show the simplest case where none of the steal attempts overlap with each other.
This is the optimal case because in fact multiple attempts may happen simultaneously and only one can succeed.
We show the lemma below as a starting point of our analysis.

\begin{lemma}\label{thm:ws-lemma-1}
  Given a victim processor $\Pi$ and $(P-1)(D + \log(1/\epsilon))$ non-overlapping steal attempts, the probability that at least $D$ tasks from the deque of the processor $\Pi$ are stolen is at least $1-\epsilon$, where $P$ is the number of processors.
\end{lemma}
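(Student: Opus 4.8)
The plan is to reduce the statement to a concentration bound on a sum of independent indicator variables. Label the steal attempts $1,\dots,N$ with $N=(P-1)(D+\log(1/\epsilon))$, and for each attempt $i$ let $Z_i=1$ if it chooses $\Pi$ as its victim and $Z_i=0$ otherwise. Since each stealing processor picks a victim uniformly among the other $P-1$ processors, $\Pr[Z_i=1]=1/(P-1)$, and because the attempts are non-overlapping their victim choices are independent; hence $X:=\sum_{i=1}^{N} Z_i$ is a sum of $N$ independent Bernoulli$(1/(P-1))$ variables. The first step I would record is \emph{why} non-overlapping lets us equate targeting attempts with successful steals: when two attempts pick $\Pi$ at genuinely non-concurrent times, each finds and removes a distinct task from the end of $\Pi$'s deque, so no two attempts ever contend for the same task. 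Consequently, provided $\Pi$ holds at least $D$ tasks, the number of tasks stolen from $\Pi$ is at least $\min(X,\text{tasks available})$, so the event ``at least $D$ tasks are stolen'' contains the event $\{X\ge D\}$. It therefore suffices to prove $\Pr[X\ge D]\ge 1-\epsilon$.

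Next I would compute the mean, $\mathbb{E}[X]=N/(P-1)=D+\log(1/\epsilon)$, which is deliberately set a little above the target $D$; the additive gap $\log(1/\epsilon)$ between the mean and $D$ is precisely the slack that the tail bound must convert into a failure probability of $\epsilon$. To finish, I would apply a Chernoff lower-tail bound to $X$. Writing $\mu=\mathbb{E}[X]=D+\log(1/\epsilon)$ and choosing $\delta$ so that $(1-\delta)\mu=D$ (equivalently $\delta=\log(1/\epsilon)/\mu$), the bound $\Pr[X\le(1-\delta)\mu]\le e^{-\delta^2\mu/2}$ controls $\Pr[X<D]$ in terms of $\mu$ and $\delta$, which I would then argue is at most $\epsilon$. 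Keeping to Chernoff (the only probabilistic tool the section commits to) makes the argument self-contained and classroom-friendly.

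I expect the delicate step to be the constant bookkeeping in the Chernoff bound. The deviation we ask for, $\mu-D=\log(1/\epsilon)$, is only an \emph{additive} slack over the target, so whether the lower-tail bound collapses cleanly to $e^{-\log(1/\epsilon)}=\epsilon$ (reading $\log$ as the natural logarithm) hinges on the exact Chernoff form invoked and on how $D$ compares to $\log(1/\epsilon)$: the naive exponent $\delta^2\mu/2=\log^2(1/\epsilon)/\bigl(2(D+\log(1/\epsilon))\bigr)$ dominates $\log(1/\epsilon)$ only in favorable regimes. I would therefore state the precise multiplicative (or additive) form being used, fix the deviation accordingly, and verify the exponent dominates $\log(1/\epsilon)$; should a constant factor be required to make this hold uniformly, it can be absorbed into the number of attempts without disturbing the asymptotic $O(PD)$ conclusion that \Cref{thm:rws} ultimately needs. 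The remaining pieces---independence of the $Z_i$ and the reduction from targeting attempts to stolen tasks---are routine once the non-overlapping hypothesis is in hand.
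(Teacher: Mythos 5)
Your proposal matches the paper's proof: both reduce the lemma to a Chernoff lower-tail bound on the sum of independent indicators for ``this attempt targets $\Pi$,'' using exactly the non-overlapping hypothesis to equate hits with distinct stolen tasks. The delicate step you flagged is real, and the paper resolves it precisely as you anticipate---its proof actually takes $t=2(P-1)(D+\log(1/\epsilon))$ attempts so that $\mu=2(D+\log(1/\epsilon))$, whence $\delta^2\mu/2=(\mu-D)^2/(2\mu)>(\mu-2D)/2=\ln(1/\epsilon)$ and the failure probability is below $\epsilon$; with the unscaled count $\mu=D+\log(1/\epsilon)$ the exponent $\log^2(1/\epsilon)/(2\mu)$ would indeed fail to dominate $\log(1/\epsilon)$ when $D$ is large, so the constant factor you propose to absorb is exactly the one the paper absorbs.
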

Here we overload the notation $D$ in the lemma that was previously defined as the span of the computation.
We do so because later in the proof of \Cref{thm:rws}, we plug in $D$ as the span of the algorithm, so we just use $D$ for convenience.
We use the classic version of Chernoff bound that considers independent random variables $X_1, \ldots, X_t$ taking values in $\{0, 1\}$.
Let $X$ be the sum of these random variables, and let $\mu = \E[X]$ be the expected value of $X$, then for any ``offset'' $0<\delta<1$, $\Pr(X\le(1-\delta)\mu)\le e^{-\delta^2\mu/2}$.

\begin{proof}[\Cref{thm:ws-lemma-1}]
  Recall that in RWS, each steal attempt independently chooses a victim processor and tries to steal a task.
  Hence, it has probability $1/(P-1)$ to choose processor $\Pi$ (and steal one task if so).
  We now consider each of the steal attempts as a random variable $X_i$. $X_i=1$ if it chooses processor $\Pi$, and $0$ otherwise.
  In Chernoff Bound, let the number of random variables $t=2(P-1)(D + \log(1/\epsilon))$.
  Then for $t$ steals, the expected number of hits is $\mu=2(D + \log(1/\epsilon))$.
  We are interested in the probability $p$ that fewer than $D$ steal attempts hit processor $\Pi$.
  In this case, $\delta=(\mu-D)/\mu$, so $(1-\delta)\mu=D$.
  Applying Chernoff bound, we can get that the probability $p$ is no more than $e^{-\delta^2\mu/2}$.
  Here $\delta^2\mu/2=(\mu-D)^2/2\mu>(\mu-2D)/2=\ln(1/\epsilon)$.
  The ``$>$'' step is because we discard the $D^2/2\mu$ term that is always positive.
  Hence, $e^{-\delta^2\mu/2}<e^{-\ln(1/\epsilon)}=\epsilon$, which proves the lemma.
\end{proof}

Next, we consider the general case that multiple processors try to steal concurrently, so the steal attempts can overlap.
In this case, although unlikely, many processors may make the steal attempts ``almost'' at the same time, where only one processor wins (using arbitrary tie-break) and the others fail.
As mentioned, we assume that between a failed steal attempt and the next steal attempts on one processor, every other processor can have at most one steal attempt.
\begin{lemma}\label{thm:ws-lemma-2}
  Given a specific victim processor $\Pi$ and $(P-1)e(D + \log(1/\epsilon))/(e-1)$ steal attempts from $P-1$ other processors, the probability that $D$ tasks from the deque of the processor $\Pi$ are stolen is at least $1-\epsilon$.
\end{lemma}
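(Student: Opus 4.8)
The plan is to reduce the overlapping case back to the clean Chernoff computation already carried out in \Cref{thm:ws-lemma-1}, by organizing the steal attempts into \emph{rounds}. I would group the $(P-1)e(D+\log(1/\epsilon))/(e-1)$ attempts into $R = e(D+\log(1/\epsilon))/(e-1)$ rounds, each consisting of one attempt from each of the $P-1$ stealing processors. The asynchrony assumption with $k=1$---that between a failed steal and a processor's next attempt, every other processor makes at most one attempt---is exactly what lets me argue that these rounds are essentially separated in time, so that all attempts of round $i$ precede all attempts of round $i+1$ and attempts from distinct rounds do not overlap.

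The per-round estimate is elementary. Within a round, each of the $P-1$ attempts independently selects $\Pi$ as victim with probability $1/(P-1)$, so the probability that \emph{no} attempt in the round targets $\Pi$ is $(1-1/(P-1))^{P-1}\le 1/e$. Hence with probability at least $1-1/e=(e-1)/e$, at least one attempt in the round targets $\Pi$. I then claim that whenever a round contains an attempt targeting $\Pi$ and $\Pi$'s deque still holds a task, at least one task is actually stolen in that round: the concurrent targeting attempts compete, and by the tie-break exactly one of them wins and removes a task. Because the rounds are temporally disjoint, these per-round successes are charged to distinct stolen tasks, so there is no double counting.

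With this in hand I would define indicators $Y_i$ for the event that round $i$ yields a stolen task. The $Y_i$ are independent, since each round uses fresh, independent victim choices, and each has success probability at least $(e-1)/e$, so $\mu=\E[\sum_i Y_i]\ge R\cdot(e-1)/e = D+\log(1/\epsilon)$. Applying the same Chernoff tail bound used in \Cref{thm:ws-lemma-1}, now with the per-attempt hit replaced by the per-round success, yields $\Pr[\sum_i Y_i < D]\le\epsilon$; that is, at least $D$ tasks are stolen from $\Pi$ with probability at least $1-\epsilon$, assuming $\Pi$ holds at least $D$ tasks throughout.

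The main obstacle I expect is the second step rather than the probability calculation: rigorously translating the loosely-synchronized execution into this clean round structure. Concretely, I must use the $k=1$ assumption both to guarantee that the rounds can be formed with at most one attempt per processor and that consecutive rounds are temporally disjoint, and to ensure that a round containing a targeting attempt contributes a \emph{fresh} successful steal rather than one already counted in an adjacent round. Handling processors that issue unequal numbers of attempts---so that late rounds may be short and have a smaller per-round success probability---is the delicate point, and is precisely where the realistic asynchrony assumption does the work; the remaining Chernoff argument is identical in spirit to \Cref{thm:ws-lemma-1}.
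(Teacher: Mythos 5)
Your proposal takes essentially the same route as the paper: group the attempts into rounds of $P-1$ concurrent steals, lower-bound the per-round probability of hitting $\Pi$ by $1-(1-1/(P-1))^{P-1}\ge 1-1/e$, and then rerun the Chernoff calculation of \Cref{thm:ws-lemma-1} with rounds in place of individual attempts; the paper is simply terser about the round structure, asserting that full overlap is the most pessimistic case, whereas you are more explicit (and more candid) about the temporal-disjointness and unequal-round-size issues that both arguments leave informal. One bookkeeping caveat: with the number of rounds given in the statement your mean is $\mu = D+\log(1/\epsilon)$, for which the exponent $(\mu-D)^2/(2\mu)$ need not reach $\ln(1/\epsilon)$; the paper's own proof silently uses twice as many rounds so that $\mu=2(D+\log(1/\epsilon))$, the same statement-versus-proof constant discrepancy already present in \Cref{thm:ws-lemma-1}, so your argument needs that extra factor of $2$ to close.
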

\begin{proof}
  Although multiple processors can attempt to steal concurrently, two steals from the same processor will never be concurrent.
  Hence, based on our assumption, the most pessimistic situation is that the $P-1$ processors always have $P-1$ steals at the same time, which maximizes the chance that a steal hits the queue but fails to get the task. 
  The probability that at least one of the $P-1$ concurrent steals chooses this victim processor $\Pi$ (so that at least one of the tasks is stolen) is at least $1-(1-1/(P-1))^{P-1}>1-1/e$.

  We can similarly use Chernoff bound to show that the probability that fewer than $D$ tasks are stolen after $S'=2(D+\log(1/\epsilon))/(1-1/e)$ steps is small.
  We consider each random variable as a group of $P-1$ steal attempts, with probability of at least $1-1/e$ to choose the deque of the specific processor $\Pi$.
  In this case, the expected value of the sum $\mu=2(D + \log(1/\epsilon))$ and the offset $\delta=(\mu-D)/\mu$ remain the same as in \Cref{thm:ws-lemma-1}, so the probability is also the same.
\end{proof}

Here note that in the analysis, we do not need the assumption that the $D$ tasks are in the same deque of a certain processor.
In fact, the analysis easily extends to when the $D$ tasks are from different processors as long as there is always one task available to be stolen.
Therefore, we show the relaxed form of \Cref{thm:ws-lemma-2}.
\begin{lemma}\label{thm:ws-lemma-3}
  Given $D$ tasks and $(P-1)e(D + \log(1/\epsilon))/(e-1)$ steal attempts, the probability that these $D$ tasks are stolen is at least $1-\epsilon$, as long as at least one task is available at the time of any steal attempt.
\end{lemma}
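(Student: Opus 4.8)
The goal is to relax Lemma~\ref{thm:ws-lemma-2} so that the $D$ targets need not all reside in a single deque, but may be spread across different processors, provided only that at least one stealable task exists whenever a steal attempt is made. My plan is to reuse almost verbatim the Chernoff-bound machinery of Lemma~\ref{thm:ws-lemma-2}, but to reinterpret what a ``success'' means. The key observation is that the proof of Lemma~\ref{thm:ws-lemma-2} never really used the fact that all $D$ tasks lived in one place; it only used that, at each round of (up to) $P-1$ concurrent steal attempts, there was a single designated victim holding an available task, so that the probability of at least one attempt hitting that victim was at least $1-1/e$. I would therefore recast the argument around a dynamically chosen victim rather than a fixed one.

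\emph{First step.} Fix an arbitrary but consistent rule that, at any moment when a steal attempt is launched, designates one particular processor as the ``current target'': namely some processor whose deque currently holds at least one of the (as-yet-unstolen) available tasks. By hypothesis such a processor always exists, so this designation is well defined at every steal attempt. A steal attempt is counted as a \emph{hit} if it selects the current target and succeeds in removing a task from it. As in Lemma~\ref{thm:ws-lemma-2}, group the steal attempts into rounds of $P-1$ (the pessimistic case of maximal concurrency), and note that within one round the probability that at least one attempt selects the current target is at least $1-(1-1/(P-1))^{P-1}>1-1/e$.

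\emph{Second step.} Treat each round as a Bernoulli random variable with success probability at least $1-1/e$, exactly as before. Each hit removes at least one of the $D$ tasks from the pool of remaining available tasks, so after $D$ hits all $D$ tasks have been stolen. Setting the number of rounds to $2(D+\log(1/\epsilon))/(1-1/e)$ gives expected number of hits $\mu=2(D+\log(1/\epsilon))$ and offset $\delta=(\mu-D)/\mu$, identical to Lemma~\ref{thm:ws-lemma-2}, so the Chernoff bound yields failure probability at most $\epsilon$. The total number of steal attempts is $(P-1)\cdot 2(D+\log(1/\epsilon))/(1-1/e)=(P-1)e(D+\log(1/\epsilon))/(e-1)$, matching the statement.

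\emph{The main obstacle.} The subtle point—and the step I would be most careful about—is the independence assumption underlying the Chernoff bound once the victim is allowed to change adaptively. In Lemma~\ref{thm:ws-lemma-2} the victim $\Pi$ is fixed in advance, so the indicator variables are genuinely independent. Here the identity of the current target may depend on the outcomes of earlier steals, which threatens independence. The way to rescue the argument is to observe that each steal attempt independently chooses its victim uniformly at random among the other $P-1$ processors, \emph{regardless} of which processor happens to be the current target; hence the event ``this attempt selects the current target'' has probability exactly $1/(P-1)$ conditioned on any history, and the per-round success probability is $\ge 1-1/e$ conditioned on the past. This gives a bound by a sum of independent (or stochastically dominating) Bernoulli variables, which is exactly what the Chernoff bound needs. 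I would state this conditioning explicitly so that the adaptivity of the target does not break the concentration argument, and then the remainder follows word-for-word from Lemma~\ref{thm:ws-lemma-2}.
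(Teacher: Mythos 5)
Your proposal is correct and follows essentially the same route as the paper, which offers no separate proof of Lemma~\ref{thm:ws-lemma-3} beyond the remark that the argument of Lemma~\ref{thm:ws-lemma-2} never uses the assumption that the $D$ tasks sit in one deque and extends as long as some task is always available. Your explicit handling of the adaptively chosen ``current target'' via conditioning on the history (so that each round still succeeds with probability at least $1-1/e$ and the Chernoff bound applies by stochastic domination) makes rigorous a step the paper leaves implicit, and the only blemish---the factor of $2$ between the number of rounds used in the Chernoff calculation and the count of steal attempts in the statement---is inherited verbatim from the paper's own Lemmas~\ref{thm:ws-lemma-1} and~\ref{thm:ws-lemma-2} and is immaterial asymptotically.
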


As discussed in \Cref{sec:np-prelim}, any nested-parallel computation can be viewed as a DAG, and each (non-termination) node is an instruction and has either two successors (for a \forkins{}) or one successor (otherwise).
The RWS scheduler dynamically maps each node to a processor.
For each specific path, the length is no more than the span $D$ of the algorithm, based on the definition.
Based on the RWS algorithm, a node will be mapped to the same processor that executes the predecessor node, except for the spawned children (definition in \Cref{sec:rws-prelim}).
The spawned child of a processor $\Pi$ is ready to be stolen during the process when $\Pi$ is executing the other branch (continuation), and will be executed by $\Pi$ if it is not stolen during this process.

We now prove Theorem~\ref{thm:rws} by showing that the computation must have been terminated after $O(PD)$ steals.
We will use Lemma~\ref{thm:ws-lemma-3} and apply union bound.

\begin{proof}[Theorem~\ref{thm:rws}]
  We consider a path in the DAG and show that all instructions on this path will be executed with no more than $O(PD)$ steals.
  Each node on this path is either a spawned child (that can be stolen) or executed directly after the previous node by the same processor.
  Now let's consider a processor that is not working on the instructions on this path.
  When the next steal is attempted, the processor working on this path either has added one more node $v$ on this path that is ready to be stolen, or has executed the node $v$.
  This is because we assume a processor executes at least one instruction between two steal attempts from another processor.
  The only case that the next node is not executed is when it is a spawned child.
  It will not be executed immediately, and needs to wait until to be stolen for execution, or for the continuation branches to finish and execute these nodes.

  Hence, let's consider the worst case that all nodes on the path are spawned children.
  Lemma~\ref{thm:ws-lemma-3} upper bounds the number of steals to finish the execution of this path.
  Namely, after $(P-1)e(D + \log(1/\epsilon))/(e-1)$ steal attempts, all nodes are stolen and executed with probability at least $1-\epsilon$.
  For a DAG with the longest path length $D$, there are at most $2^D$ paths in the DAG.
  Now we set $\epsilon=1/(2^D\cdot W^c)$ where $W$ is the work of the computation (the number of nodes in the DAG).
  For any constant $c\ge 1$, $(P-1)e(D + \log(2^D\cdot W^c))/(e-1)=O(P(D+\log W))$ steals are sufficient for executing all existing paths.
  Now we take the union bound on the probability that all $2^D$ paths will finish, which is $1-2^D\cdot\epsilon=1-W^{-c}$.
  Since each node in the DAG can have at most two successors, the DAG needs to have $D=\Omega(\log W)$ longest path length to contains $W$ nodes.
  Hence, the $\log W$ term will not dominate, which simplifies the number of steals to be $O(PD)$.
\end{proof}

We have attempted to include this analysis in a few lectures of parallel algorithm courses, and we also would like to include the answer to a frequently asked question.
The question is, in the analysis, we apply union bound on $2^D$ paths, but apparently, the $O(PD)$ steals cannot cover all paths since $PD$ is a much lower-order term than $2^D$ in practice.
Theoretically, the answer is that $\epsilon=2^D\cdot n^c$ is a sufficiently small term for us to apply union bound, which can give us the desired bound in \Cref{thm:rws}.
The more practical and easy-to-understand answer is that we are assuming the worst case, and in practice we do not need all spawn children to be stolen in the execution.
In fact, it is likely that most of them are executed by the same processor that spawns this child.
Take a parallel-for-loop as an example, which can be viewed $\log n$ level of binary-forks.
For most of the paths in this DAG, the spawn children are executed by the same processor that executes the parent node.
Because of the design of the RWS algorithm, most of the successful steals will involve a large chunk of work, so steal attempts are infrequent.
The analysis shows that, once $O(PD)$ steals are made, the path must have finished, but it is more likely that the computation has finished even before this number of attempts are made.

\section{Analysis for Parallel Cache Complexity}\label{sec:pcc}

Studying parallel cache complexity for nested-parallel algorithms scheduled by RWS is a crucial topic for parallel computing and has been studied in many existing papers (e.g.,~\cite{Acar02,frigo2009cache,cole2012revisiting,cole2013analysis}).
The goal in these analyses is to show the parallel overhead when scheduling using RWS, in addition to the sequential cache complexity.
We show the best existing parallel bounds for a list of widely used algorithms and problems in \Cref{table:intro}.
While the results in these papers are reasonably good for algorithms with low (polylogarithmic) span, the bounds for parallel overhead can be significant for algorithms with linear or super-linear span.
Compared to the lower bounds for the parallel overhead, the upper bounds given in these papers incur polynomial (usually $n^{1/2}$ or $n^{1/3}$) overheads.
Such parallel overhead will dominate most of the input range when compared to the sequential cache bounds.
Meanwhile, it is known that the practical performance of many of these algorithms is almost as good as low-span algorithms~\cite{chowdhury2010cache,SchardlThesis}.
Hence, it remains an open problem for decades to tightly bound the parallel overhead of such algorithms.

In this section, we show a new analysis to give almost tight parallel cache bounds for the list of problems in \Cref{table:intro}, which are only a polylogarithmic factor off the lower bounds for the main term.
Unlike the previous approaches that analyze the scheduler, we directly study the recurrence relations of such computations and find it surprisingly simple.
This new analysis is inspired by the concept of \emph{$k$d-grid}~\cite{BG2020} (see more details in \Cref{sec:approach}).

In the rest of this section, we first review the existing work on this topic in \Cref{sec:pcc-related}. 
\Cref{sec:approach} presents the high-level idea and the main theorem (\Cref{thm:recur}) of our analysis, which provides a general approach to solve the cache complexity based on recurrences.
In \Cref{sec:kleene}, we use a simple example of Kleene's algorithm to show how to use the newly introduced main theorem.
Finally, we show the new results for more complicated algorithms in \Cref{sec:applications}, and discuss the applicability and open problems in \Cref{sec:discuss}.

\subsection{Related Work}\label{sec:pcc-related}

Given the importance of I/O efficiency and the RWS scheduler, parallel cache bounds have been studied for over 20 years.
The definition on distributed cache was given by Acar et al.~\cite{Acar02}, and they also showed a trivial parallel upper bound on $P$ processors: $Q_P\le Q_1+O(PDM/B)$.
To achieve this bound, one just needs to pessimistically assume that in each of the $O(PD)$ steals, the stealing thread accesses the entire cache from the original processor, which is $O(M/B)$ additional cache misses.
This bound is easy to understand and good for algorithms with polylogarithmic span, but is too loose for linear and super-linear span algorithms.
Hence, the following later works showed tighter parallel cache bounds for algorithms with certain structures.

Frigo and Strumpen~\cite{frigo2009cache} first analyzed the parallel cache complexity of a class of divide-and-conquer computations, such as matrix multiplication and 1D Stencil,
where the problems have subproblem cache complexity as a ``concave'' function of the computation cost (see more details in~\cite{frigo2009cache}).
Actually, the idea from~\cite{frigo2009cache} is general and can be applied to a variety of algorithms as shown in \Cref{table:intro}.
Later work by Cole and Ramachandran~\cite{cole2012revisiting} pointed out a missing part of the analysis in~\cite{frigo2009cache}---the additional cache misses by accessing the execution stacks after a successful steal.
They carefully studied this problem, and showed that in most cases, this additional cost is asymptotically bounded by other terms (so the bounds are the same as~\cite{frigo2009cache}). In other cases, this can lead to a small overhead (e.g., matrix multiply in row-major format).
The authors of~\cite{cole2012revisiting} also extended the set of applicable algorithms and showed tighter parallel cache bounds for problems such as FFT and list ranking.
For the algorithms in this paper, we assume the matrices are in bit-interleaved format~\cite{Frigo99}, so algorithms incur no asymptotic cost for accessing the execution stacks after steals.
Even not, we note that all algorithms in \Cref{table:intro} do not require accesses to the cactus stack anyway (many later RWS implementations chose not to support that for better practicality).

In this paper, we do not consider the additional cost of false sharing~\cite{cole2013analysis} or other schedulers~\cite{cole2017bounding,yang2018scheduling}, but it seems possible to extend the analysis in this paper to the other settings. We leave this as future work.

\subsection{Our Approach}\label{sec:approach}

As opposed to directly analyzing the algorithms on the scheduler in previous work~\cite{frigo2009cache,cole2012revisiting,cole2013analysis}, our key observation is to directly study the computation structure of these algorithms.
Interestingly, our analysis is mostly independent with the RWS scheduler, and only plugs in some results from~\cite{frigo2009cache} for some basic primitives such as matrix multiplication.
By doing so, our analysis can bound the parallel cache complexity much better than the previous results.

The idea of our analysis is motivated by the recent work by Blelloch and Gu~\cite{BG2020}.
This work studies several parallel dynamic programming and algebra problems, and defines a structure called $k$d-grid to reveal the computational structure
of these problems.
It uses $k$d-grid with $k=2$ or $3$ to model a list of classic problems, such as matrix multiplication, to capture the memory access pattern of
these problems.
By using $k$d-grid, their analysis decouples the parallel dependency (and, effectively, the span) from the sequential cache complexity in many parallel algorithms (see details in~\cite{BG2020}).
Although they only applied the $k$d-grid analysis on sequential cache complexity, the high-level idea motivates us to also revisit the analysis of parallel cache complexity, and inspired us to directly analyze the essence of the computation structure (the recurrences) of the algorithms.
This effectively avoids the crux in previous analysis~\cite{Acar02,frigo2009cache,cole2012revisiting,cole2013analysis}, which incurs a polynomial overhead in the cache complexity charged by the span of the algorithm.
In all of our analyses, the span of these algorithms, no matter linear or super-linear, do not show up in the analysis, which is very different from previous work.
Of course, larger span does lead to more parallel cache overhead since it increases the number of steal attempts and each successful one incurs at least one additional cache miss.
However, in all applications in this paper, this term is bounded by either the sequential bound or the main term for parallel overhead.
Combining all together, we summarize all cache bounds in \Cref{table:intro}, and our new parallel cache bounds for linear and super-linear span algorithms are almost as good as those of the low-span algorithms (e.g., matrix multiplication).

As mentioned, our analysis will use previous results to derive the parallel cache bounds for $k$d-grids, and use the recurrence relation to bound the entire algorithm.
We formalize the recurrences we study for these algorithms and problems, which we refer to as the \dacrecur{}.

\begin{definition}[\dacrecur]
An \dacrecur{} is a recurrence in the following form:
$$Q(n)=\alpha\cdot Q(n/\beta)+\sum{k_i\cdot n^{l_i}\log^{m_i}n}$$
where $l_i$ and $m_i$ are non-negative numbers, and $k_i$ is a function of $P$, $M$ and $B$.
\end{definition}
In the next section, we will use Kleene's algorithm as an example to show an instantiation of this recurrence relation.
The \dacrecur is easy to solve using the master method~\cite{bentley1980general}:
\begin{theorem}[Main Theorem]\label{thm:recur}
  The solution to $Q(n)$, an \dacrecur, is:
  $$O\left(\sum{k_i\cdot n^{l_i}\log^{m_i}n}+\sum{k_j\cdot n^{l_j}\log^{m_j+1}n}+\sum{k_r}n^{\log_\beta\alpha}\right)$$
  for $l_i>\log_\beta\alpha$, $l_j=\log_\beta\alpha$, and $l_r<\log_\beta\alpha$.
\end{theorem}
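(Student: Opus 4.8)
The plan is to treat this as a term-by-term application of the master method and then sum. Since the recurrence $Q(n)=\alpha\cdot Q(n/\beta)+\sum_i k_i n^{l_i}\log^{m_i} n$ is \emph{linear} in its driving function, I would first reduce to the single-term case: for each $i$ define $Q_i(n)=\alpha\cdot Q_i(n/\beta)+k_i n^{l_i}\log^{m_i} n$ sharing the same constant-size base case, so that $Q(n)$ and $\sum_i Q_i(n)$ agree up to a bounded base-case correction. It then suffices to solve each $Q_i$ individually and add the results, which is exactly the three-way grouping in the statement.

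For a single term I would unroll the recurrence into its recursion tree. The tree has depth $\log_\beta n$, with $\alpha^t$ nodes at level $t$, each a subproblem of size $n/\beta^t$, so the total cost is
$$\sum_{t=0}^{\log_\beta n}\alpha^t\,k_i\Bigl(\tfrac{n}{\beta^t}\Bigr)^{l_i}\log^{m_i}\!\Bigl(\tfrac{n}{\beta^t}\Bigr)=k_i n^{l_i}\sum_{t=0}^{\log_\beta n} r^{\,t}\,\log^{m_i}\!\Bigl(\tfrac{n}{\beta^t}\Bigr),\qquad r:=\frac{\alpha}{\beta^{l_i}}.$$
The sign of $\log_\beta\alpha-l_i$ determines whether the ratio $r$ is less than, equal to, or greater than $1$, and this is precisely what drives the three cases.

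The case analysis then runs as follows. (i) If $l_i>\log_\beta\alpha$ then $r<1$, the geometric factor $\sum_t r^{t}$ converges, and bounding $\log^{m_i}(n/\beta^t)\le\log^{m_i} n$ uniformly gives $O(k_i n^{l_i}\log^{m_i} n)$, i.e.\ the root dominates. (ii) If $l_i=\log_\beta\alpha$ then $r=1$ and the sum reduces to $\sum_{t=0}^{\log_\beta n}\log^{m_i}(n/\beta^t)$, at most $\log_\beta n$ equal-weight levels each of size $\le\log^{m_i} n$, yielding $O(k_i n^{l_i}\log^{m_i+1} n)$. (iii) If $l_i<\log_\beta\alpha$ then $r>1$ and the leaves dominate; reindexing by $s=\log_\beta n-t$ turns the sum into a convergent geometric series in $r^{-s}$, and since $r^{\log_\beta n}=n^{\log_\beta\alpha-l_i}$, the leaf level contributes $k_i n^{l_i}\cdot r^{\log_\beta n}=k_i n^{\log_\beta\alpha}$, so the total is $O(k_i n^{\log_\beta\alpha})$. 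Summing over $i$ reproduces the three groups of the claimed bound, with the third group collecting into $\bigl(\sum_r k_r\bigr)n^{\log_\beta\alpha}$.

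The main obstacle is handling the $\log^{m_i}$ factors cleanly across the three regimes. In cases (i) and (ii) the crude uniform bound $\log^{m_i}(n/\beta^t)\le\log^{m_i} n$ is enough, but in case (iii) one must verify that these polylogarithmic factors are genuinely absorbed into the leaf term $O(n^{\log_\beta\alpha})$ rather than producing an extra logarithm; I would confirm this through the reindexing above, where the geometric decay $r^{-s}$ dominates any fixed power of $\log$ and leaves only a convergent constant. A secondary, purely technical point is making the linearity reduction rigorous at the base case, but since base-case sizes are $O(1)$ this perturbs the leaf count by only a constant factor and does not affect the asymptotics.
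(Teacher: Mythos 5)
Your proposal is correct and matches the route the paper takes: the paper gives no explicit proof, deferring to the master method of Bentley, Haken, and Saxe, and your term-by-term recursion-tree unrolling (with the three cases driven by the sign of $l_i-\log_\beta\alpha$) is exactly that standard argument, which the paper also illustrates concretely by unrolling Kleene's recurrence. Your handling of the polylog factors in the leaf-dominated case is the one point needing care, and your reindexing argument resolves it correctly.
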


As shown here and in the next section, the parallel dependencies of the computation (the algorithm's span) do not show up in the analysis and the solution, which is different from the previous analyses~\cite{Acar02,frigo2009cache,cole2012revisiting,cole2013analysis}.

In the rest of this section, we will first use Kleene's algorithm as an example to show how to use our approach to derive tighter parallel cache bound. Then we show a list of cache-oblivious algorithms that we can apply \Cref{thm:recur} to and get improved bounds.

It is worth mentioning that the cache bound contains the term for the call stack of the (recursive) subproblems.
This term is a constant in the sequential bound, and in many cases the parallel term is the same as the number of steals (e.g., for matrix multiplication and Kleene's algorithm).
In other cases, directly applying \Cref{thm:recur} leads to a $O(Pn^{\log_\beta\alpha})$ term, which is suboptimal since Cole and Ramanchadran~\cite{cole2012revisiting} showed that this term can be $O(PD)$.
Hence, when $n^{\log_\beta\alpha}>D$, instead of using $n^{\log_\beta\alpha}>D$ from \Cref{thm:recur}, we plug in the $O(PD)$ term from~\cite{cole2012revisiting}, which gives a tighter result.

\subsection{Kleene's Algorithm as an Example}\label{sec:kleene}

To start with, we use Kleene's algorithm for all-pair shortest-paths (APSP) as an example to explain the analysis.
Kleene's algorithm solves the all-pair shortest-paths (APSP) problem that takes a graph $G=(V,E)$ (with no negative cycles) as input.
The Kleene's algorithm was first mentioned in~\cite{kleene1951representation,munro1971efficient,fischer1971boolean,furman1970application}, and later discussed in full details in~\cite{Aho74}. It is a divide-and-conquer algorithm that is I/O-efficient, cache-oblivious and highly parallelized.
The pseudocode of Kleene's algorithm is in Algorithm~\ref{alg:kleene}.

\begin{algorithm}[!h]
\caption{\mf{Kleene}($A$)}
\label{alg:kleene}
\DontPrintSemicolon
\fontsize{9pt}{11pt}\selectfont
\KwIn{Distance matrix $A$ initialized based on the input graph $G=(V,E)$}
\KwOut{Computed Distance matrix $A$}

\smallskip

\lIf {$|A| = 1$} {\Return{$A$}}

\smallskip

$A_{00}\gets \mf{Kleene}(A_{00})$\\
$A_{01}\gets A_{01}+A_{00}A_{01}$\\
$A_{10}\gets A_{10}+A_{10}A_{00}$\\
$A_{11}\gets A_{11}+A_{10}A_{01}$\\

\smallskip

$A_{11}\gets \mf{Kleene}(A_{11})$\\
$A_{01}\gets A_{01}+A_{01}A_{11}$\\
$A_{10}\gets A_{10}+A_{11}A_{10}$\\
$A_{00}\gets A_{00}+A_{10}A_{01}$\\

\smallskip

\Return{$A$}
\end{algorithm}

In Kleene's algorithm, the graph $G$ is represented as the matrix $A$, where $A[i][j]$ is the weight of the edge between vertices $i$ and $j$ (the weight is $+\infty$ if the edge does not exist).
$A$ is partitioned into 4 submatrices indexed as $\begin{bmatrix}A_{00}&A_{01}\\A_{10}&A_{11}\end{bmatrix}$.
The matrix multiplication is defined in a closed semi-ring with $(+, \min)$.
The high-level idea is first to compute the APSP between the first half of the vertices only using the paths between these vertices.
Then by applying some matrix multiplication, we update the shortest paths between the second half of the vertices using the computed distances from the first half.
We then apply another recursive subtask on the second half vertices.
The computed distances are finalized, and then we use them to update the shortest paths from the first-half vertices.

The cache complexity $Q(n)$ and \depth{} $D(n)$ of this algorithm follow the recurrence relations:
\begin{align}
Q(n)=2Q(n/2)+6Q_{\smb{MM}}(n/2)\label{eqn:kleene-cache}\\
D(n)=2D(n/2)+2D_{\smb{MM}}(n/2)\label{eqn:kleene-span}
\end{align}
where $Q_{\smb{MM}}(n)$ is the I/O cost of a matrix multiplication of input size $n$.
Note that the recurrence relation for the cache complexity is true no matter if we are considering the sequential case (e.g., $Q_1$ and $Q_{\smb{MM},1}$) or the parallel case (e.g., $Q_P$ and $Q_{\smb{MM},P}$).
For the parallel matrix multiplication algorithm from~\cite{BG2020}, we have $D_{\smb{MM}}(n)=O(\log^2 n)$, $D(n)=O(n)$, $\displaystyle Q_{\smb{MM},1}=\Theta\left(\frac{n^3}{B\sqrt{M}}+\frac{n^2}{B}+1\right)$, and $Q_1=\displaystyle \Theta\left(\frac{n^3}{B\sqrt{M}}+\frac{n^2}{B}+n\right)$.

If we directly use the result from~\cite{Acar02}, then we get the parallel cache bound $Q_P=\displaystyle Q_1+O\left(\frac{PnM}{B}\right)$.
As the significant growth of processor count and cache size, the $O({PnM}/{B})$ term dominates unless $n$ is very large.
The tighter bound from~\cite{frigo2009cache} shows $Q_P=\displaystyle Q_1+O\left(\frac{P^{1/3}n^{7/3}}{B}+Pn\right)$.
This bound is tighter than the previous one from~\cite{Acar02}, but the $O(P^{1/3}n^{7/3}/B)$ term still dominates unless $n=\omega(P^{1/2}M^{3/4})$, which is unlikely in practice.
The parallel lower bound for this computation~\cite{ballard2014communication,BG2020} is $\displaystyle Q_P=Q_1+\Omega\left(\frac{P^{1/3}n^2}{B}\right)$, so a polynomial gap remains between the lower and upper cache bounds.
Our analysis significantly closes this gap to polylogarithmic.

Now we use \Cref{thm:recur} to directly solve this \dacrecur.
\Cref{eqn:kleene-cache} includes the cache complexity of matrix multiplication.
The parallel bounds on $P$ processors based on the algorithm from~\cite{BG2020} is:
\begin{equation}\label{eqn:mm-par}
  Q_{\smb{MM},P}=O\left(\frac{n^3}{B\sqrt{M}}+\frac{P^{1/3}n^2\log^{2/3}n}{B}+P\log^{2}n\right).
\end{equation}
which can be shown by the analysis from~\cite{frigo2009cache,cole2012revisiting}.
Now we can plug in \Cref{eqn:mm-par} to \Cref{eqn:kleene-cache}, and get an \dacrecur for $Q_P(n)$.
In this case, we have $\alpha=\beta=2$, and $\{(k_i,l_i,m_i)\}=\{$$(1/B\sqrt{M},3,0),$ $(P^{1/3}/B,2,2/3),$ $(P,0,2)\}$.
Plugging in \Cref{thm:recur} directly gives the solution of:
$$Q_P(n)=O\left(\frac{n^3}{B\sqrt{M}}+\frac{P^{1/3}n^2\log^{2/3}n}{B}+Pn\right).$$

In this case, the input size is $O(n^2)$ so the corresponding term $l_i=2>\log_\beta\alpha=1$.
Hence, even though Kleene's algorithm has linear span as opposed to the polylogarithmic span for matrix multiplication, the additional steals caused by the span will not affect the input term (the $l_i=2$ term in this case for Kleene's algorithm and matrix multiplication).
In fact, as one can see, either the span bound, or the span recurrence (\Cref{eqn:kleene-span}), does not show up in the entire analysis.

Since Kleene's algorithm is very simple, we can also show how to directly solve the recurrence by plugging \Cref{eqn:mm-par} in \Cref{eqn:kleene-cache}.
We believe this can illustrate a more intuitive idea of our analysis.
\[
\begin{split}
Q(n)=&6Q_{\smb{MM}}(n/2)+12Q_{\smb{MM}}(n/4)+\cdots+3n\cdot Q_{\smb{MM}}(1)\\
=&O\left(\frac{6n^3}{8B\sqrt{M}}+\frac{12n^3}{64B\sqrt{M}}+\cdots\right)+\\
&O\left(\frac{6P^{1/3}n^2\log^{2/3}n}{4B}+\frac{12P^{1/3}n^2\log^{2/3}n}{16B}+\cdots\right)+\\ &O\left(P\log^{2}n+2P\log^{2}(n/2)+\cdots+Pn\right)\\
=&O\left(\frac{n^3}{B\sqrt{M}}+\frac{P^{1/3}n^2\log^{2/3}n}{B}+Pn\right).
\end{split}\]
Here the terms in the first two big-Os are decreasing geometrically, while the last term increases geometrically.
The main term for parallel overhead is only a polylogarithmic factor ($O(\log^{2/3}n)$) more than the lower bound, as opposed to a polynomial factor ($O(n^{1/3})$) in the previous terms.

Although Kleene's algorithm can also be directly analyzed as shown above, using \Cref{thm:recur} enables a simpler way to show a tighter bound of Kleene's algorithm than previous analysis.
More importantly, for many algorithms that are more complicated than Kleene's algorithm, it is nearly impossible to show new bounds by directly plugging in the recurrences, in which case using \Cref{thm:recur} easily enables simple analysis and tighter bounds. We will then present these algorithms and our new analysis and bound in \Cref{sec:applications}.

\hide{
To start with, we use Kleene's algorithm for all-pair shortest-paths (APSP) as an example to explain the analysis.
Kleene's algorithm is very simple, so even without using \Cref{thm:recur}, it is not hard to show the parallel cache bound in \Cref{table:intro}.
However, this is not the case for many complicated algorithms later discussed in \Cref{sec:applications}, as many of them are given in the recent paper by~\cite{BG2020}, and their analysis is much harder without the definitions and theorems in \Cref{sec:approach}.

Kleene's algorithm solves the all-pair shortest-paths (APSP) problem that takes a graph $G=(V,E)$ (with no negative cycles) as input.
The Kleene's algorithm (first mentioned in~\cite{kleene1951representation,munro1971efficient,fischer1971boolean,furman1970application}, discussed in full details in~\cite{Aho74}) is a divide-and-conquer algorithm, which is I/O-efficient, cache-oblivious and highly parallelized.
The pseudocode of Kleene's algorithm is provided in Algorithm~\ref{alg:kleene}.

\begin{algorithm}[!h]
\caption{\mf{Kleene}($A$)}
\label{alg:kleene}
\DontPrintSemicolon
\fontsize{9pt}{11pt}\selectfont
\KwIn{Distance matrix $A$ initialized based on the input graph $G=(V,E)$}
\KwOut{Computed Distance matrix $A$}

\smallskip

\lIf {$|A| = 1$} {\Return{$A$}}

\smallskip

$A_{00}\gets \mf{Kleene}(A_{00})$\\
$A_{01}\gets A_{01}+A_{00}A_{01}$\\
$A_{10}\gets A_{10}+A_{10}A_{00}$\\
$A_{11}\gets A_{11}+A_{10}A_{01}$\\

\smallskip

$A_{11}\gets \mf{Kleene}(A_{11})$\\
$A_{01}\gets A_{01}+A_{01}A_{11}$\\
$A_{10}\gets A_{10}+A_{11}A_{10}$\\
$A_{00}\gets A_{00}+A_{10}A_{01}$\\

\smallskip

\Return{$A$}
\end{algorithm}

In Kleene's algorithm, the graph $G$ is represented as the matrix $A$.
$A$ is partitioned into 4 submatrices indexed as $\begin{bmatrix}A_{00}&A_{01}\\A_{10}&A_{11}\end{bmatrix}$.
The matrix multiplication is defined in a closed semi-ring with $(+, \min)$.
The high-level idea is first to compute the APSP between the first half of the vertices only using the paths between these vertices.
Then by applying some matrix multiplication, we update the shortest paths between the second half of the vertices using the computed distances from the first half.
We then apply another recursive subtask on the second half vertices.
The computed distances are finalized, and then we use them to update the shortest paths from the first-half vertices.

The cache complexity $Q(n)$ and \depth{} $D(n)$ of this algorithm follow the recursions:
\begin{equation}
Q(n)=2Q(n/2)+6Q_{\smb{MM}}(n/2)\label{eqn:kleene-cache}
\end{equation}
$$D(n)=2D(n/2)+2D_{\smb{MM}}(n/2)$$
where $Q_{\smb{MM}}(n)$ is the I/O cost of a matrix multiplication of input size $n$.
Using the parallel matrix multiplication algorithm from~\cite{BG2020}, we have $D_{\smb{MM}}(n)=O(\log^2 n)$, $D(n)=O(n)$, $\displaystyle Q_{\smb{MM},1}=\Theta\left(\frac{n^3}{B\sqrt{M}}+\frac{n^2}{B}+1\right)$, and $Q_1=\displaystyle \Theta\left(\frac{n^3}{B\sqrt{M}}+\frac{n^2}{B}+n\right)$.

If we directly use the result from~\cite{Acar02}, then we get the parallel cache bound $Q_P=\displaystyle Q_1+O\left(\frac{pnM}{B}\right)$.
As the significant growth of processor count and cache size, the $O({pnM}/{B})$ term dominates unless $n$ is very large.
The tighter bound from~\cite{frigo2009cache,cole2012efficient,cole2012revisiting} shows $Q_P=\displaystyle Q_1+O\left(\frac{P^{1/3}n^{7/3}}{B}+Pn\right)$.
This bound is tighter than the previous one from~\cite{Acar02}, but the $O(P^{1/3}n^{7/3}/B)$ term still dominates unless $n=\omega(P^{1/2}M^{3/4})$, which is unlikely in practice.
The parallel lower bound for this computation~\cite{ballard2014communication,BG2020} is $\displaystyle Q_P=Q_1+\Omega\left(\frac{P^{1/3}n^2}{B}\right)$, so a polynomial gap remains between the lower and upper cache bounds.
Our analysis significantly closes this gap to polylogarithmic.

As discussed in \Cref{sec:approach}, our analysis will first use the bound from~\cite{frigo2009cache,cole2012efficient,cole2012revisiting} on the parallel matrix multiplication algorithm from~\cite{BG2020}, which is
\begin{equation}\label{eqn:mm-par}
  Q_{\smb{MM},P}=O\left(\frac{n^3}{B\sqrt{M}}+\frac{P^{1/3}n^2\log^{2/3}n}{B}+p\log^{2}n\right).
\end{equation}

Since Kleene's algorithm is very simple, we first directly solve the recurrence and show the result by plugging \Cref{eqn:mm-par} in \Cref{eqn:kleene-cache}.
\[
\begin{split}
Q(n)=&6Q_{\smb{MM}}(n/2)+12Q_{\smb{MM}}(n/4)+\cdots+3n\cdot Q_{\smb{MM}}(1)\\
=&O\left(\frac{6n^3}{8B\sqrt{M}}+\frac{12n^3}{64B\sqrt{M}}+\cdots\right)+~\\
 &O\left(\frac{6P^{1/3}n^2\log^{2/3}n}{4B}+\frac{12P^{1/3}n^2\log^{2/3}n}{16B}+\cdots\right)+\\
 &O\left(p\log^{2}n+2p\log^{2}(n/2)+\cdots+Pn\right)\\
=&O\left(\frac{n^3}{B\sqrt{M}}+\frac{P^{1/3}n^2\log^{2/3}n}{B}+Pn\right).
\end{split}\]
Here the terms in the first two big-Os are decreasing geometrically, while the last term increases geometrically.
The main term for parallel overhead is only a polylogarithmic factor ($O(\log^{2/3}n)$) more than the lower bound, as opposed to a polynomial factor ($O(n^{1/3})$) in the previous terms.

Now we show how to use \Cref{thm:recur} to directly solve this \dacrecur.
For $Q_P(n)$, we have $\alpha=\beta=2$, and $\{(k_i,l_i,m_i)\}=\{$$(1/B\sqrt{M},3,0),$$(P^{1/3}/B,2,2/3),$$(P,0,2)\}$.
Plugging in \Cref{thm:recur} directly gives the above solution.

In this case, the input size is $O(n^2)$ so the corresponding term $l_i=2>\log_\beta\alpha=1$.
Hence, even though Kleene's algorithm has linear span as opposed to the polylogarithmic span for matrix multiplication, the additional steals caused by the span will not affect the input term (the $l_i=2$ term in this case for Kleene's algorithm and matrix multiplication).
}

\subsection{Other Applications}\label{sec:applications}

We have shown the main theorem (\Cref{thm:recur}) and the intuition why it leads to better parallel cache complexity for Kleene's algorithm.
We now apply the theorem to a variety of classic cache-oblivious algorithms, which leads to better parallel cache bound.
The details of the algorithms can be found in~\cite{dinh2016extending,BG2020}.
Some of these algorithms are complicated, here we only show the recurrences and the parallel cache bounds since those are all we need.

\subsubsection{Building Blocks}

Before we go over the applications, we first show the parallel cache complexity of some basic primitives ($k$d-grid and matrix transpose) that the applications use.

\textbf{Matrix Multiplication (MM).} Matrix multiplication is modeled as a 3d-grid in \cite{BG2020}.
The sequential cache bound $Q_{\smb{MM},1}$ is $\Theta\left(\frac{n^3}{B\sqrt{M}}+\frac{n^2}{B}+1\right)$, and the parallel bound on $p$ processors $Q_{\smb{MM},P}$ is $O\left(\frac{n^3}{B\sqrt{M}}+\frac{P^{1/3}n^2\log^{2/3}n}{B}+P\log^{2}n\right)$.
Here we assume the matrix is stored in the bit-interleaved (BI) format~\cite{Frigo99}, which can be easily converted from other formats such as the row-major format with the same cost as matrix transpose.

\textbf{Matrix Transpose (MT).} Matrix transpose is another widely used primitives in cache-oblivious algorithms.
The sequential cache bound $Q_{\smb{MT},1}$ is $\Theta\left(\frac{n^2}{B}+1\right)$, and the parallel bound on $p$ processors~\cite{frigo2009cache,cole2012revisiting} is:
\begin{equation}\label{eqn:mt-par}
Q_{\smb{MT},P}=O\left(\frac{n^2}{B}+P\log^{2}n\right).
\end{equation}

\textbf{The 2d-grid.} The 2d-grid can be viewed as an analog of matrix multiplication, but is a 2 dimensional computation instead of 3 dimensional as in MM ($O(n^3)$ arithmetic operations and memory accesses).
It can also be viewed as a matrix-vector multiplication but the matrix is implicit.
The 2d-grid is a commonly used primitive in dynamic programming algorithms~\cite{BG2020}.
The sequential cache bound $Q_{\smb{2D},1}$ is $\Theta\left(\frac{n^2}{BM}+\frac{n}{B}+1\right)$, and the parallel bound on $p$ processors~\cite{frigo2009cache,cole2012revisiting} is \begin{equation}\label{eqn:2d-par}
Q_{\smb{2D},P}=O\left(\frac{n^2}{BM}+\frac{P^{1/2}n\log n}{B}+P\log^{2}n\right).
\end{equation}

\subsubsection{Gaussian Elimination}

Here we consider the parallel divide-and-conquer Gaussian elimination algorithm shown in~\cite{BG2020}, with the recurrence of the cache bound as $Q(n)=2Q(n/2)+4Q_{\smb{MM}}(n/2)$.
Compared to Kleene's algorithm (\Cref{eqn:kleene-cache}), this recurrence only differs by a constant.
Hence, the parallel cache bound is asymptotically the same as Kleene's algorithm.

\subsubsection{Triangular System Solver}

The triangular system solver (TRS) solves the linear system that takes the output of Gaussian elimination (i.e., $Ax=b$ where $A$ is an upper triangular matrix).
We consider the parallel divide-and-conquer algorithm for a triangular system solver from~\cite{BG2020} with cubic work and linear span.
The cache bound is: $$Q_{\smb{TRS}}(n)=4Q_{\smb{TRS}}(n/2)+2Q_{\smb{MM}}(n/2).$$

To analyze the parallel cache complexity, we can plug in \Cref{eqn:mm-par} and get the \dacrecur with $\alpha=4$, $\beta=2$, and $\{(k_i,l_i,m_i)\}=\{$ $(1/B\sqrt{M},3,0),$ $(P^{1/3}/B,2,2/3),$ $(P,0,2)\}$.
Applying  and \Cref{thm:recur} leads to the parallel cache complexity as
\begin{equation}\label{eqn:trs-par}
Q_{\smb{TRS},P}=O\left(\frac{n^3}{B\sqrt{M}}+\frac{P^{1/3}n^2\log^{5/3}n}{B}+Pn\right).
\end{equation}

\subsubsection{Cholesky Factorization and LU Decomposition}

Both Cholesky factorization and LU decomposition are widely used linear algebraic tools to decompose a matrix to the product of a lower triangular matrix and an upper triangular matrix.
The divide-and-conquer algorithms for Cholesky factorization and LU decomposition~\cite{dinh2016extending} are quite similar in the way that they are designed on top of triangular system solver and matrix multiplication.
The cache bounds for both algorithms are: $$Q(n)=2Q(n/2)+Q_{\smb{TRS}}(n/2)+O(1)\cdot Q_{\smb{MM}}(n/2).$$

We can plug in \Cref{eqn:mm-par} and \Cref{eqn:trs-par} to get the \dacrecur with $\alpha=2$, $\beta=2$, and $\{(k_i,l_i,m_i)\}=$ $\{(1/B\sqrt{M},3,0),$ $(P^{1/3}/B,2,5/3),$ $(P,0,2)\}$.
Since $\log_\beta\alpha=2$, the parallel bound is almost the same as $Q_{\smb{TRS}}$, except that the span for these algorithms is $O(n\log n)$, which increases the last term by a logarithmic factor.
Hence for these two problems, we have: $$Q_{p}=O\left(\frac{n^3}{B\sqrt{M}}+\frac{P^{1/3}n^2\log^{5/3}n}{B}+Pn\log n\right).$$

\subsubsection{LWS Recurrence}

The LWS (least-weighted subsequence) recurrence~\cite{hirschberg1987least} is one of the most commonly-used DP recurrences in practice.
Given a real-valued function $w(i,j)$ for integers $0\le i<j\le n$ and $D_0$, for $1\le j\le n$,
$$D_j=\min_{0\le i<j}\{D_i+w(i,j)\}.$$
This recurrence is widely used in real-world applications~\cite{kleinberg2006algorithm,knuth1981breaking,galil1992dynamic,galil1994parallel,aggarwal1990applications,kunnemann2017fine}.
Here we assume that $w(i,j)$ can be computed in constant work based on a constant size of input associated to $i$ and $j$, which is true for all these applications.

Here we consider the parallel divide-and-conquer algorithm to solve LWS recurrence from~\cite{BG2020} with quadratic work and linear span.
This algorithm partitions the problems into two halves, solves the first one, applies a 2d-grid computation, and solves the second one.
The cache bound is $Q(n)=2Q(n/2)+Q_{2D}(n/2)$.

Here, by using \Cref{eqn:2d-par}, the \dacrecur has $\alpha=2$, $\beta=2$, and $\{(k_i,l_i,m_i)\}=$ $\{(1/BM,2,0),$ $(P^{1/2}/B,1,1),$ $(P,0,2)\}$.
Since, $\log_\beta\alpha=1$, so the parallel cache bound is: $$Q_P(n)=O\left(\frac{n^2}{BM} + \frac{P^{1/2} n \log^2{n}}{B} + Pn\right).$$

\subsubsection{GAP Recurrence}

The GAP problem~\cite{galil1989speeding,galil1994parallel} is a generalization of the edit distance problem that
has many applications in molecular biology, geology, and speech recognition.
Given a source string $X$ and a target string $Y$, an ``edit'' can be a sequence of consecutive deletes corresponding to a gap in $X$, and a sequence of consecutive inserts corresponding to a gap in $Y$.
Let $w(p,q)$ $(0\le p < q\le n)$ be the cost of deleting the substring of $X$ from $(p+1)$-th to $q$-th character, $w'(p,q)$ be inserting the substring of $Y$ accordingly, and
$r(i,j)$ be the cost to change the $i$-th character in $X$ to $j$-th character in $Y$.

Let $D_{i,j}$ be the minimum cost for such transformation from the prefix of $X$ with $i$ characters to the prefix of $Y$ with $j$ characters, the recurrence for $i,j>0$ is:
\[\displaystyle D_{i,j}=\min\left\{\begin{matrix}\min_{0\le q<j}\{D_{i,q}+w'(q,j)\}\\\min_{0\le p<i}\{D_{p,j}+w(p,i)\}\\D_{i-1,j-1}+r(i,j)\end{matrix}\right.\]
corresponding to either replacing a character, inserting or deleting a substring.
The best parallel divide-and-conquer algorithm to compute the GAP recurrence is proposed by Blelloch and Gu~\cite{BG2020}.
The cache bound recurrence of the algorithm in~\cite{BG2020} is $Q(n)=4Q(n/2)+4(n/2)\cdot Q_{2D}(n/2)+2Q_{\smb{MT}}(n/2)$, which includes 4 subproblems with half size, a linear number of 2d-grid (see more details in~\cite{BG2020}), and 2 matrix transpose calls.

To derive parallel cache complexity, we can apply \Cref{eqn:mt-par} and \Cref{eqn:2d-par} and get the \dacrecur with $\alpha=4$, $\beta=2$, and $\{(k_i,l_i,m_i)\}=\{$$(1/BM,3,0),$ $(P^{1/2}/B,2,1),$ $(P,0,2)\}$.
Then using \Cref{thm:recur} gives $\log_\beta\alpha=2$ and $$Q_P(n)=O\left(\frac{n^3}{BM} + \frac{P^{1/2} n^2 \log^2{n}}{B} + Pn^{\log_23}\right).$$

\subsubsection{RNA recurrence}

The RNA problem~\cite{galil1994parallel} is a generalization of the GAP problem. 
In this problem, a weight function $w(p,q,i,j)$ is given, which is the cost to delete the substring of $X$ from $(p+1)$-th to $i$-th character and insert the substring of $Y$ from $(q+1)$-th to $j$-th character.
Similar to GAP, let $D_{i,j}$ be the minimum cost for such transformation from the prefix of $X$ with $i$ characters to the prefix of $Y$ with $j$ characters, the recurrence for $i,j>0$ is:
$$\displaystyle D_{i,j}=\min_{\substack{0\le p<i,0\le q<j}}\{D_{p,q}+w(p,q,i,j)\}.$$
This recurrence is widely used in computational biology, like to compute the secondary structure of RNA~\cite{waterman1978rna}.
The RNA recurrence can be viewed as a 2d version of the LWS recurrence, and the latest algorithm from~\cite{BG2020} has the cache bound of
$Q(n)=4Q(n/2)+Q_{2D}(n^2)$.

For parallel cache complexity, the \dacrecur by plugging in \Cref{eqn:2d-par} is $\alpha=4$, $\beta=2$, and $\{(k_i,l_i,m_i)\}=$ $\{(1/BM,4,0),$ $(P^{1/2}/B,2,1),$ $(P,0,2)\}$.
The parallel cache bound can be solved as: $$Q_P(n)=O\left(\frac{n^4}{BM} + \frac{P^{1/2} n^2 \log^2{n}}{B} + Pn^{\log_23}\right).$$

\subsubsection{Protein Accordion Folding}

The recurrence for protein accordion folding~\cite{tithi2015high} is: $$D_{i,j}=\max_{1\le k< j-1}\{D_{j-1,k}+w(i,j,k)\}$$ for $1\le j<i\le n$, with $O(n^2/B)$ cost to precompute $w(i,j,k)$.
In~\cite{BG2020}, a parallel divide-and-conquer algorithm for protein accordion folding is given, with cubic work and linear span.
The recurrence for the cache bound is $Q(n)=2Q(n/2)+Q_{MT}(n/2)+(n/2)\cdot Q_{2D}(n/2)$, which includes 2 subproblems with half size, a linear number of 2d-grid, and 1 matrix transpose call.

For parallel cache complexity, we can apply \Cref{eqn:mt-par} and \Cref{eqn:2d-par} and get $\alpha=2$, $\beta=2$, and $\{(k_i,l_i,m_i)\}=\{$$(1/BM,3,0),$ $(P^{1/2}/B,2,1),$ $(P,0,2)\}$.
Since $\log_\beta\alpha=1$, we can get:
$$Q_P(n)=O\left(\frac{n^3}{BM} + \frac{P^{1/2} n^2 \log{n}}{B} + Pn\log^2n\right).$$

\subsubsection{Parenthesis Recurrence}

The Parenthesis recurrence solves the following problem in many applications~\cite{CLRS,galil1992dynamic,galil1994parallel,yao1980efficient}: for a linear sequence of objects, an associative binary operation, and the cost of performing that operation on any two objects, the goal is to compute the min-cost way to group the objects by repeatedly applying the binary operations.
Let $D_{i,j}$ be the minimum cost to merge the objects indexed from $i+1$ to $j$ (1-based), the recurrence for $0\le i<j\le n$ is:
$$D_{i,j}=\min_{i<k<j}\{D_{i,k}+D_{k,j}+w(i,k,j)\}$$ where $w(i,k,j)$ is the cost to merge the two partial results of objects indexed from $i+1$ to $k$ and those from $k+1$ to $j$.

The parallel cache-oblivious algorithm for the parenthesis
recurrence is introduced in~\cite{chowdhury2008cache}.
The original problem and some subproblems are triangle ones ($Q_\triangle$), but when computing them, we need other square subproblems ($Q_\square$).
The recurrence relations for the cache bounds are $Q_\triangle(n)=2Q_\triangle(n/2)+Q_\square(n/2)$,
and $Q_\square(n)=4Q_\square(n/2)+4Q_{\smb{MM}}(n/2)$.

To compute the parallel cache complexity, we apply \Cref{eqn:mm-par} and \Cref{thm:recur} to $Q_\square$ first to obtain a parallel cache complexity of $Q_{\square,P}(n)= O\left(\frac{n^3}{B\sqrt{M}} + \frac{P^{1/3}n^2 \log^{5/3}{n}}{B} + Pn^{\log_23}\right)$. Then we substitute this into $Q_\triangle$ to apply \Cref{thm:recur} again and get:
$$Q_P(n)=Q_{\triangle,P}(n)=O\left(\frac{n^3}{B\sqrt{M}} + \frac{P^{1/3}n^2 \log^{5/3}{n}}{B} + Pn^{\log_23}\right).$$

\subsection{Discussions}\label{sec:discuss}

A common theme in our analysis is to apply the parallel cache bounds of the basic primitives from~\cite{frigo2009cache} in a certain step, and then use the recursive structure to derive parallel cache bound for the entire algorithm.
We note that in many cases this is better than directly using the result in~\cite{frigo2009cache} for the entire algorithm.
However, we note that this is not always true, and it relies on the recursive structure.
More precisely, it depends on whether the number of base-case subproblems is asymptotically no more than the input elements (true for all algorithms discussed above).
A counterexample is the edit distance problem (or longest common subsequence).
The recurrence for the divide-and-conquer algorithm is $Q(n)=4Q(n/2)+O(1)$.
In this case, using \Cref{thm:recur} gives a looser bound than using the analysis from~\cite{frigo2009cache}.
Hence, how to tightly bound the parallel cache complexity for edit distance remains an open problem.

\hide{
\section*{Acknowledgement}
This work is in partial supported by NSF grant CCF-2103483.
}
\balance


\end{document}